\documentclass[reqno,11pt]{article} 
\usepackage[margin=1in]{geometry}
\usepackage{amssymb}
\usepackage{amsmath}
\usepackage{amsthm}
\usepackage{xspace}
\usepackage{tikz}
\usepackage{hyperref}

\newtheorem{theorem}{Theorem}
\newtheorem{lemma}[theorem]{Lemma}
\newtheorem{proposition}[theorem]{Proposition}
\newtheorem{corollary}[theorem]{Corollary}

\theoremstyle{remark}

\newcommand{\RtoL}{\text{RevtoL}\xspace} 

\newcommand{\ALG}{\textsc{ALG}\xspace}
\newcommand{\OPT}{\textsc{OPT}\xspace}

\begin{document}

\title{Online Interval Selection on a Simple Chain}

\author{
Yaqiao Li\footnote{Shenzhen University of Advanced Technology, Shenzhen, China, liyaqiao@suat-sz.edu.cn},
Ali Mohammad Lavasani\footnote{Concordia University, Montr{\'e}al, Canada, ali.mohammadlavasani@concordia.ca},
Denis Pankratov\footnote{Concordia University, Montr{\'e}al, Canada, denis.pankratov@concordia.ca}}

\maketitle

\begin{abstract}
A set of intervals $I = \{ I_1, I_2, \dots, I_n \}$ forms a simple chain if, for every $2\leq i \leq n-1$, interval $I_i$ overlaps only with $I_{i-1}$ and $I_{i+1}$.  
We show that a  deterministic memoryless one-directional revoking algorithm achieves a competitive ratio of $2(1 - 1/\sqrt{e}) \approx 0.786$ on the simple chain in the random order model, hence performs worse than the basic greedy algorithm without revoking that has a competitive ratio of $(1 - 1/e^2) \approx 0.864$, but better than any deterministic revoking algorithm in the adversarial model that has a competitive ratio of at most $0.75$. The proof of the latter also leads to a lower bound of $n/4$ for the advice complexity. 
\end{abstract}

\section{Introduction}

In the Online Interval Selection problem, a set of intervals $I = \{ I_1, I_2, \dots, I_n \}$ arrives one by one, an algorithm must decide whether to accept or reject each interval as it arrives, ensuring that the selected intervals do not overlap. 
In the unweighted version of this problem, discussed in this work, the objective is to maximize the number of selected intervals. In the revocable acceptances setting, the algorithm is allowed to revoke previously selected intervals to accept new ones, while maintaining a set of non-overlapping intervals at all times. 
Interval selection has many applications, such as resource allocation, network routing,  etc, see surveys by
Kolen et al. \cite{kolen2007interval} and Kovalyov et al \cite{kovalyov2007fixed}.

The performance of an online algorithm is measured by its \emph{competitive ratio}. For an online maximization problem, Let $\ALG(I)$ denote the performance of algorithm $\ALG$ on input $I$, and let $\OPT$ be an optimal offline algorithm. We say  $\ALG$ has a competitive ratio of at least $\rho$ if there exists a constant $b$ such that $\ALG(I) \geq \rho \cdot \OPT(I) - b$ for all inputs $I$. The competitive ratio of $\ALG$ is defined as the largest $\rho$ for which this inequality holds.

Recently, Borodin and Karavasilis~\cite{borodin2023any} introduced the \emph{(any order) adversarial}   model,  where  intervals are allowed to arrive in any possible order.  They also discussed the \emph{random order model}, where the set of intervals is generated by an adversary and then given to the algorithm in a uniform random order. We study both models  and focus  on the input that is a \textit{simple chain}: $n$ unit-length intervals represent a path graph of $n$ vertices, see Figure \ref{fig:chain}.

\begin{figure}[ht!]        
    \centering
    \includegraphics[scale=.9]{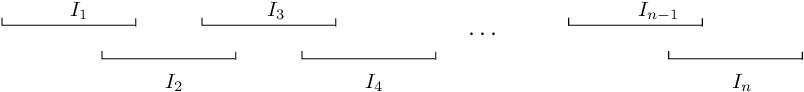}
    \caption{A simple chain of  intervals of size $n$.}
\label{fig:chain}
\end{figure}

Our work is inspired by the following comment in  \cite{borodin2023any}, ``for single-lengthed instances, a one-directional algorithm  is the only deterministic memoryless algorithm that can possibly benefit from random arrivals.'' For such instance Borodin and Karavasilis~\cite{borodin2023any} gave a simple revoking algorithm that achieves a competitive ratio of $1/2$ in the adversarial model, and proved that every deterministic memoryless revoking algorithm that is not \emph{one-directional} has a competitive ratio of at most $1/2$ in the random order model.  Here, one-directional means either  only revoke intervals on the left side, or only on the right side.

Our results partially confirm the comment on the simple chain. We determine a one-directional revoking algorithm Revoke-to-the-Left achieves a competitive ratio of $2(1 - 1/\sqrt{e}) \approx 0.786$ in the random order model, and prove that in the adversarial model any deterministic revoking algorithm has an upper bound of competitive ratio of at most $3/4=0.75 < 0.786$. Though, we point out that Revoke-to-the-Left performs worse than  the basic greedy algorithm without revoking that has a competitive ratio of $(1 - 1/e^2) \approx 0.864$. Still,  the Revoke-to-the-Left algorithm remains more promising in more general interval graphs. The construction also leads to a lower bound of  $n/4$ for the advice complexity,  marking the first lower bound for this problem.  In the advice setting, an all-powerful trustworthy oracle, which has full knowledge of the input in advance, populates an advice tape with bits according to a pre-agreed protocol. The online algorithm has the option to read the advice bits to achieve the best possible performance against an adversary. The worst-case number of advice bits read by the online algorithm for inputs of length $n$ is the \emph{advice complexity} of the algorithm, see   a survey by Boyar et al \cite{boyar2016online_advice_survey}. Recently, Boyar et al \cite{boyar2023onlineinterval_predictions} and Karavasilis \cite{karavasilis2025interval_prediction} demonstrated the power of prediction, i.e., imperfect advice, on interval selection problems.

It would be interesting to generalize our results to $k$-chain, where each interval intersects with $k$ intervals on either side. This helps for a deeper understanding of the random order versus adversarial models, and it seems challenging.

\section{Revoke-to-the-Left in the random order model}

We first consider the basic greedy algorithm without revoking that selects an interval whenever it does not overlap other selected intervals.

\begin{theorem}
	The basic greedy algorithm has a competitive ratio of $(1-\frac{1}{e^2}) \approx 0.864$ in the random order model.
\end{theorem}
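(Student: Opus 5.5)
The plan is to recognize that greedy without revoking, run in uniform random order on a simple chain, is exactly random sequential adsorption of dimers (equivalently, random greedy independent set) on the path $P_n$. We then compute the expected number of selected intervals $E_n$ and compare it with $\OPT=\lceil n/2\rceil$. A uniform random arrival order is equivalent to giving each interval $I_i$ an independent uniform arrival time $t_i\in[0,1]$. Greedy accepts $I_i$ exactly when, at time $t_i$, neither neighbour $I_{i-1}$ nor $I_{i+1}$ has already been accepted. This yields a recursive structure. Condition on the first interval to arrive, which is uniform over positions $j$. It is accepted, its two neighbours are blocked, and the chain splits into independent uniformly random instances on $j-2$ and $n-j-1$ intervals (with boundary adjustments at the ends). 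Hence
\[
E_n = 1 + \frac{1}{n}\sum_{j=1}^{n} \bigl(E_{j-2} + E_{n-j-1}\bigr) = 1+\frac{2}{n}\sum_{k=0}^{n-2} E_k ,
\]
with the convention $E_{-1}=E_0=0$ and $E_1=1$.

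Next I solve this recurrence with the generating function $F(x)=\sum_{n\ge 0}E_n x^n$. Multiplying by $nx^{n-1}$ and summing turns it into the linear ODE
\[
F'(x) = \frac{1}{(1-x)^2} + \frac{2x}{1-x}F(x), \qquad F(0)=0 .
\]
Using the integrating factor $(1-x)^2e^{2x}$ gives the closed form
\[
F(x)=\frac{1-e^{-2x}}{2(1-x)^2}.
\]
Its dominant singularity is a double pole at $x=1$ with coefficient $(1-e^{-2})/2$, so singularity analysis (or simply expanding and reading off coefficients) gives
\[
E_n = \frac{1-e^{-2}}{2}\,n + O(1).
\]

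The conclusion then needs two directions. Since $\OPT=\lceil n/2\rceil$, we get $E_n\ge (1-e^{-2})\OPT-b$ for a constant $b$, so the ratio is at least $1-1/e^2$. The same computation shows the ratio cannot exceed this value: for the chain of length $n$ and any $\rho>1-1/e^2$, the quantity $\rho\OPT-E_n$ grows without bound. The ratio is therefore exactly $1-1/e^2$.

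An alternative route avoids generating functions. For an interior interval far from the ends, compute the probability that it is accepted by looking along the chain for the pattern of decreasing arrival times, i.e.\ the run of predecessors that arrived earlier. This gives the integral $\int_0^1 e^{-2(1-\cdots)}$-type expressions, leading to a per-interval acceptance probability of $(1-e^{-2})/2$. Boundary effects contribute only $O(1)$ in total, which gives the same result.

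The main obstacle is the bookkeeping. We must get the boundary conventions in the recurrence right and justify that the $O(1)$ error term is uniformly bounded. That is what makes the additive constant $b$ legitimate in both directions of the competitive-ratio claim. Solving the ODE itself is routine.
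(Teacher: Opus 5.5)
Your proof is correct, and its skeleton matches the paper's. You identify greedy on the chain with random sequential selection on a path (the paper's ``unfriendly seating'' reduction), obtain $E_n=\frac{1-e^{-2}}{2}n+O(1)$, and compare with $\OPT=\lceil n/2\rceil$.

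The difference is that the paper simply cites Freedman and Shepp for the asymptotic, whereas you derive it. Your derivation checks out:
\begin{itemize}
\item The first-arrival recurrence $E_n=1+\frac{2}{n}\sum_{k=0}^{n-2}E_k$ is right, and the convention $E_{-1}=E_0=0$ handles $j\in\{1,2,n-1,n\}$ correctly. It gives $E_2=1$, $E_3=5/3$, $E_4=2$, which match direct computation.
\item The ODE is derived correctly.
\item The closed form $F(x)=\frac{1-e^{-2x}}{2(1-x)^2}$ is right.
\end{itemize}

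Your route buys self-containment and an exact formula:
$E_n=\sum_{m=1}^{n}\frac{(-1)^{m+1}2^{m-1}}{m!}(n+1-m)=\frac{1-e^{-2}}{2}(n+1)-e^{-2}+o(1)$.
This makes the additive constant $b$ explicit and settles the uniform-boundedness concern you raise. You also spell out both directions of the ratio (the bound and its tightness), which the paper leaves implicit in ``the theorem follows.''

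It is worth noting why the first-arrival decomposition works here but would fail for \RtoL. Greedy's first acceptance is irrevocable and blocks both neighbours, so the chain splits into independent sub-chains. Under \RtoL, an early acceptance can later be revoked by its left neighbour, which is why the paper conditions on the \emph{last} arrival there instead.

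Your ``alternative route'' via runs of decreasing arrival times is only sketched. Its claim that boundary effects contribute $O(1)$ would need an estimate showing that acceptance probabilities converge quickly to the bulk value away from the ends. The main generating-function argument does not depend on it, so this does not affect correctness.
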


\begin{proof}
	Consider the  Unfriendly Seating Arrangement problem: There are $n$ seats in a row at a luncheonette and people sit down one at a time at random. They are unfriendly and so never sit next to one another (no moving over). Freedman and Shepp \cite{freedman1962unfriendly} showed  the expected number of persons to sit down  is  $n(1-\frac{1}{e^2})/2 + O(1)$. It is easy to see that there is a one-to-one mapping from the seats taken   to the intervals selected by the basic greedy algorithm. Since $\OPT$ is always $\lceil n/2 \rceil$, the theorem follows.
\end{proof}

\begin{figure}[ht!]        
    \centering
    \includegraphics[scale=.9]{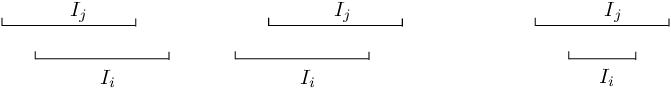}
    \caption{In all cases, $I_j$ is already selected when $I_i$ arrives. In the first, $I_i$ is rejected. In second and third, $I_j$ is revoked and $I_i$ is accepted. In our proof, the third case would not occur as our input is a simple chain.}  \label{fig:revtol}
\end{figure}

Now we consider the ``Revoke-to-the-Left'' (\RtoL) algorithm. It can be defined for general inputs, not only chains. When an interval $I_i = [s_i, f_i]$ arrives, if it overlaps with a selected interval $I_j = [s_j, f_j]$ such that $s_i < f_j < f_i$ (i.e., $I_j$ is on the left side of $I_i$), $I_i$ is rejected. Otherwise, $I_i$ is accepted, revoking any previously selected interval that overlaps with $I_i$ (see Figure \ref{fig:revtol}).

\begin{theorem}
\label{thm:rtol}
	The  algorithm \RtoL has a competitive ratio of $2(1 - \frac{1}{\sqrt{e}}) \approx 0.786$ in the random order model.
\end{theorem}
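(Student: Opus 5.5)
The plan is to compute $\mathbb{E}[\RtoL(I)]$ \emph{exactly}, up to an additive $O(1)$, for the simple chain $I=\{I_1,\dots,I_n\}$ of every length $n$. Since $\OPT(I)=\lceil n/2\rceil$, the whole theorem then reduces to proving
\[
\mathbb{E}[\RtoL(I)] \;=\; \Bigl(1-\tfrac{1}{\sqrt e}\Bigr)\, n \;\pm\; O(1).
\]
The lower-bound inequality, with a constant $b$ independent of $n$, gives ratio at least $2(1-1/\sqrt e)$ on every input. The matching upper bound, letting $n\to\infty$, gives tightness. Thus the real work is one uniform asymptotic estimate, not a single limiting example.

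\textbf{Step 1: translate the rule to the chain.} Model random order by i.i.d.\ arrival times $t_1,\dots,t_n\sim U[0,1]$, and let $S_i(t)$ indicate that $I_i$ is selected at time $t$. On a simple chain, \RtoL acts as follows:
\begin{itemize}
\item $I_i$ is rejected iff $S_{i-1}(t_i)=1$;
\item otherwise $I_i$ is accepted, and it revokes $I_{i+1}$ if $I_{i+1}$ is selected.
\end{itemize}
Consequently $S_i(\cdot)$ is a deterministic function of $t_1,\dots,t_i$ alone: the right neighbour never influences $I_i$. In particular $S_{i-1}(\cdot)$ is independent of $t_i$. The final indicator is
\[
F_i=\mathbf 1[t_{i-1}<t_i,\ S_{i-1}(t_i)=0] \;+\; \mathbf 1[t_{i-1}>t_i,\ S_{i-2}(t_{i-1})=1],
\]
where the second term corresponds to $I_{i-1}$ being rejected, so that $I_i$, accepted earlier, survives. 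By linearity, $\mathbb{E}[\RtoL]=\sum_i \Pr[F_i=1]$.

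\textbf{Step 2: a left-to-right recursion.} The left-only dependence lets me write a recursion in $i$ for the function $a_i(t)=\Pr[S_i(t)=1]$, together with the joint quantities needed for the revocation term. Writing the latter as $b_i(t,s)=\Pr[S_i(t)=1,\ S_{i-1}(s)=\cdots]$ for $s>t$ is the step I expect to be the \textbf{main obstacle}. The difficulty is that $S_{i-1}$ at two different times is correlated through $t_{i-1}$.

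My first attempt will be to condition on $t_{i-1}$ and exploit the structure of a single coordinate. A fixed interval's status over time is: unarrived, then possibly selected, then possibly revoked once, after which it is rejected forever-equivalent (it can never be reselected). So $S_{i-1}(\cdot)$ is a single indicator window $[t_{i-1},\tau_{i-1})$. It therefore suffices to track the density of the pair $(t_{i-1},\tau_{i-1})$, which obeys a recursion in $i$. This yields an integral operator $T$ mapping the law of $(t_{i-1},\tau_{i-1})$ to the law of $(t_i,\tau_i)$.

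\textbf{Step 3: fixed point and the integral.} I will identify the fixed point of $T$. I expect it to reduce to a linear ODE in $t$ with an exponential solution, involving a factor of the form $e^{-t/2}$ or $e^{-t}$. Integrating the stationary $\Pr[F_i=1]$ over $t\in[0,1]$ should give exactly $1-e^{-1/2}$; I would sanity-check this against small cases $n=2,3$ computed by hand.

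\textbf{Step 4: contraction gives the $O(1)$ error.} I will show that $T$ is a contraction, with $\|T\mu-T\nu\|\le c\|\mu-\nu\|$ for some $c<1$ in total variation, since each step's dependence on the previous law passes through events of probability bounded away from $1$. Starting from the boundary law of $I_1$, which is always accepted and never revoked, the contraction gives
\[
\bigl|\Pr[F_i=1]-(1-e^{-1/2})\bigr|\le C c^{i}.
\]
A symmetric bound is not needed at the right end, because of the left-only dependence. Summing these errors over $i$ gives the $O(1)$ error term uniformly in $n$. This completes both directions of the theorem.
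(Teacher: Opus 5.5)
Your route is genuinely different from the paper's and can be completed. However, Step~3, which carries the entire constant, is only anticipated, and Step~4 is false as stated.

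The paper counts permutations. It encodes the left-only dependence as $d_{n,i}=n\,d_{n-1,i}$ and conditions on the \emph{last} arriving interval, which splits the chain into two non-interacting blocks. It then recognizes the involution numbers $a_n$ and sums $\sum_n(-1)^n a_n/(n+2)!$ via the EGF $G(x)=e^{x+x^2/2}$, ending at $\int_0^1(1-x)e^{-x+x^2/2}\,dx$.

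Your arrival-time embedding reaches the same integral more directly. The joint law you flagged as the main obstacle is not needed. Since $t_i$ is independent of $S_{i-1}(\cdot)$, $t_{i-1}$ is independent of $S_{i-2}(\cdot)$, and $t_i<t_{i-1}$ already forces $S_{i-1}(t_i)=0$, the functions $q_i(t)=\Pr[S_i(t)=0]=1-a_i(t)$ satisfy a closed recursion. For $i\ge 2$ it reads $q_i'(t)=-q_{i-1}(t)+t\,q_{i-2}(t)$ with $q_i(0)=1$, starting from $q_0\equiv 1$ and $q_1(t)=1-t$. Moreover $\Pr[F_i=1]=1-q_i(1)$.

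The fixed point solves $q'=-(1-t)q$, so $q(t)=e^{-t+t^2/2}$ (neither $e^{-t/2}$ nor $e^{-t}$). Hence $1-q(1)=\int_0^1(1-t)e^{-t+t^2/2}\,dt=1-e^{-1/2}$. In fact $q_i$ is exactly the degree-$i$ Taylor polynomial of $e^{-t+t^2/2}=\sum_k(-1)^k a_k t^k/k!$. Therefore $\sum_{i\ge 1}\bigl|\Pr[F_i=1]-(1-e^{-1/2})\bigr|\le\sum_k k\,a_k/k!=G'(1)=2e^{3/2}$. This gives the uniform additive $O(1)$ error explicitly, whereas the paper only computes the limit of the ratio. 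It also reproduces the paper's values $d_i/i!=1,0,\tfrac23,\tfrac14,\dots$.

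Step~4 fails because $T$ is not a one-step contraction in total variation. The fate of $I_i$ is a deterministic function of whether $I_{i-1}$ was rejected or kept. Let $\mu$ be the law of $(U,U)$ ($I_{i-1}$ rejected at a uniform time) and $\nu$ the law of $(U,\infty)$ ($I_{i-1}$ accepted and never revoked). Under $T\mu$ the interval $I_i$ is almost surely accepted and never revoked. Under $T\nu$ it is almost surely rejected or revoked at $t_{i-1}$. So $T\mu$ and $T\nu$ are mutually singular, like $\mu$ and $\nu$, and no $c<1$ exists.

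If you want to keep the Markov-chain formulation, use a three-step coupling instead. Drive two copies with the same arrival times and consider the event $t_{i+1}<t_i<t_{i-1}$, which has probability $1/6$ whatever the state after step $i-2$. On this event both copies accept $I_{i+1}$ at $t_{i+1}$ and $I_i$ at $t_i$, and $I_i$ then revokes $I_{i+1}$. Both copies are then in state $(t_{i+1},t_i)$ and agree forever after, so $T^3$ contracts with constant $5/6$. With the explicit Taylor-polynomial solution above, though, no contraction argument is needed at all.
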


Let $I$ be a simple chain of $n$ intervals, numbered successively by interval $1$, \ldots, interval $n$.  Let $S_n$ be the set of all permutations on $[n]$. Let $\sigma \in S_n$. Let $\RtoL(I,\sigma)$ be the set of intervals chosen by \RtoL on  $I$ when they arrive in the order of $\sigma$.  
Let $d_{n,i} = \#\{\sigma \in S_n : \text{ interval } i \text{ is chosen by \RtoL} \}$, $d_n := d_{n,n}$. We are interested in $D_n = \sum_{\sigma \in S_n} \RtoL(I,\sigma)$, i.e., the expected number of intervals selected by \RtoL is $D_n/n!$. By double counting, 
$D_n := \sum_{i=1}^n d_{n,i}$. 
Table \ref{tab:data} shows data for small $n$.

\begin{table}[ht!]  
\centering
\caption{Data for $d_n, D_n$.} \label{tab:data} \label{tab:data}
\begin{tabular}{|c|c|c|c|c|c|c|}
\hline
$n$   & 1 & 2 & 3  & 4  & 5   & 6    \\ 
\hline
$d_n$ & 1 & 0 & 4  & 6  & 56  & 260  \\ 
\hline
$D_n$ & 1 & 2 & 10 & 46 & 286 & 1976 \\ 
\hline
$d_{n,n} - d_{n,n-1} $ & 1 & $-2$ & 4 & $-10$ & 26 & $-76$ \\ 
\hline
\end{tabular}
\end{table}

\begin{lemma}   \label{lem:one_step_reduction}
    For $1 \le i \le n-1$,
	$d_{n,i} = n d_{n-1,i} = n (n-1) \cdots (i+1) d_i$.
\end{lemma}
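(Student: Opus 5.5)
The plan is to show that whether interval $i$ ends up in $\RtoL(I,\sigma)$ depends only on the relative order in which intervals $1,\dots,i$ arrive under $\sigma$. The formula then follows by counting. On a simple chain, interval $j$ overlaps only $j-1$ (which lies to its left) and $j+1$ (which lies to its right), and I will read the rule of \RtoL off in these terms.

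\textbf{Step 1: the rule on a chain.} When interval $j$ arrives, it is rejected exactly when $j-1$ is currently selected, since $j-1$ is the only possible selected interval $I_k$ with $s_j<f_k<f_j$. Otherwise $j$ is accepted, and the only interval it can revoke is $j+1$, if that one is selected. Consequently, an interval $k$ changes status only in two situations: when $k$ itself arrives, or when $k-1$ arrives and is accepted, in which case $k$ is revoked. Its status is never changed by the arrival of any interval of larger index.

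\textbf{Step 2: induction on time.} I will prove by induction on the number of arrivals so far that, at every moment, the selection status of intervals $1,\dots,i$ is a function only of which of $1,\dots,i$ have arrived and in what relative order. Consider the arrival of an interval $j$.
\begin{itemize}
\item If $j>i$, Step 1 shows that only $j$ and possibly $j+1$ change status, and both have index larger than $i$, so the status of $1,\dots,i$ is unchanged.
\item If $j\le i$, the decision depends only on the status of $j-1$. The only affected intervals are $j$ and $j+1$, and whatever happens to $j+1$ when $j+1>i$ is irrelevant to the claim. By induction, the status of $j-1$ is determined by the relative order of $1,\dots,i$.
\end{itemize}
This completes the induction. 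In particular, the final status of interval $i$ in the chain of length $n$ equals its final status in the chain $1,\dots,i$ run on the restriction of $\sigma$ to $\{1,\dots,i\}$. In that shorter chain, interval $i$ is the last interval, so it is counted by $d_i$.

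\textbf{Step 3: counting.} The map $S_n\to S_i$ that restricts a permutation to the relative order of $\{1,\dots,i\}$ is $n!/i!$-to-one. Combined with Step 2, this gives
\[
d_{n,i}=\frac{n!}{i!}\,d_i=n(n-1)\cdots(i+1)\,d_i .
\]
Applying the same identity with $n-1$ in place of $n$ (valid since $i\le n-1$) gives $d_{n-1,i}=\frac{(n-1)!}{i!}d_i$, and hence $d_{n,i}=n\,d_{n-1,i}$.

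The main obstacle is Step 2: one has to state the invariant carefully enough that revocations coming from the left are handled correctly. The point is that information propagates only rightward: $j-1$ influences $j$ both through rejection and through revocation, while $j+1$ never influences $j$.
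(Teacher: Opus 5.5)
Your proof is correct and rests on the same key fact as the paper's proof: on a simple chain, intervals with larger index never affect whether a lower-indexed interval is chosen by \RtoL. The paper removes only interval $n$ to get $d_{n,i}=n\,d_{n-1,i}$ and then iterates, while you remove the whole block $i+1,\dots,n$ at once (via an explicit induction over arrivals) and count the $n!/i!$ permutations in each fiber of $S_n\to S_i$; this is a more carefully spelled-out version of the same argument.
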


\begin{proof}
    It suffices to show 
        $d_{n,i} = n d_{n-1,i}$.
    This is because for every $1 \le i \le n-1$, the presence of interval $n$ does not influence whether interval $i$  is chosen or not chosen. Indeed, interval $n$ has the least priority and \RtoL does not revoke or reject any other interval because of interval $n$. Since interval $n$ can be in $n$ different positions, the equality follows.
\end{proof}

The following is the key equation for proving Theorem \ref{thm:rtol}.

\begin{proposition}   \label{thm:formula_dn_difference}
    For $n\ge 4$,
        $d_{n,n} - d_{n,n-1} = 2d_{n-1,n-2} - d_{n-1,n-1} - d_{n-1,n-3}$.
\end{proposition}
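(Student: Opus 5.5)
The plan is to exploit the fact, already used in Lemma~\ref{lem:one_step_reduction}, that on a simple chain \RtoL acts very locally. When interval $i$ arrives it is rejected iff interval $i-1$ is currently selected. Otherwise it is accepted, and it revokes interval $i+1$ if that interval is selected. In particular, interval $n$ never affects any other interval. So for $\sigma\in S_n$, the whole history of intervals $1,\dots,n-1$ is the run of \RtoL on the chain of length $n-1$, with order $\sigma'\in S_{n-1}$ obtained by deleting $n$ from $\sigma$.

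The first step is to express both terms on the left-hand side through the \emph{status over time} of interval $n-1$ in the $(n-1)$-chain. Interval $n$ is finally chosen iff it is accepted, since nothing revokes it. That happens iff interval $n-1$ is not selected at the moment $n$ arrives. Interval $n-1$ is finally chosen iff it is selected at the end of the run on $\sigma'$.

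Interval $n-1$ can be selected only during one time window. The window opens when $n-1$ arrives, provided $n-2$ is not selected then. It closes when $n-2$ arrives later and is accepted, which requires $n-3$ to be unselected at that moment. Interval $n-1$ can never be re-selected, since it arrives only once. Writing $d_{n,n}=\sum_{k=0}^{n-1}\#\{\sigma'\in S_{n-1}: n-1 \text{ not selected after the first } k \text{ arrivals of } \sigma'\}$ and $d_{n,n-1}=n\,d_{n-1,n-1}$, I obtain $d_{n,n}-d_{n,n-1}$ as a signed count over pairs $(\sigma',k)$. Each pair contributes according to whether position $k$ lies inside or outside the selection window of $n-1$.

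The second step is to split this count by the relative order of the arrivals of $n-1$ and $n-2$ and by the insertion point $k$. There are three cases: $k$ is before $n-1$ arrives; $n-1$ arrives before $k$ but is rejected, or its window has already closed; or $k$ is inside the window.

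Each case reduces to a statement about intervals $n-2$ and $n-3$ at a particular moment. For example, "$n-1$ was rejected" means "$n-2$ was selected when $n-1$ arrived". Likewise, "the window closed" means "$n-2$ arrived after $n-1$ and was accepted, hence is finally chosen unless $n-3$ later revokes it". Such moment-statements can be converted back into final-choice statements in a chain of length $n-1$ by the same insertion trick run in reverse. Treat the arriving element as the last interval of the chain $1,\dots,n-1$: inserting it at position $k$ and asking whether it is accepted is exactly the event counted by $d_{n-1,n-1}$. Similarly, statements about $n-2$ being finally chosen in the $(n-1)$-chain are counted by $d_{n-1,n-2}$, and those about $n-3$ by $d_{n-1,n-3}$.

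The goal is to show that, after these bijections, the contributions collapse to $2d_{n-1,n-2}-d_{n-1,n-1}-d_{n-1,n-3}$. The condition $n\ge4$ guarantees that interval $n-3$ exists.

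The main obstacle will be the window-closing case. Whether $n-2$ revokes $n-1$ depends on the status of $n-3$ when $n-2$ arrives, which in turn depends on $n-4$. The difficulty is to avoid descending further down the chain. I would first try to cancel terms so that only final statuses survive: pair each configuration in which $n-2$ is rejected because $n-3$ was selected with a configuration counted by $d_{n-1,n-3}$. As a sanity check before committing to the bijections, I would verify the identity against Table~\ref{tab:data} for $n=4,5,6$, for instance $-10=2\cdot 6-4-\cdot$ with $d_{3,1}=3\cdot 2\cdot d_1=6$ giving $12-4-\ldots$ (adjusting bookkeeping as needed).
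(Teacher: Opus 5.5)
There is a genuine gap, and it occurs at your first step. The claim ``interval $n$ is finally chosen iff it is accepted, since nothing revokes it'' is false. You correctly note that interval $n$ never \emph{affects} any other interval, but $n$ itself can be revoked. By the rule you stated, if $n-1$ arrives after $n$ while $n-2$ is unselected, then $n-1$ is accepted and revokes $n$.

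So your sum $\sum_{k=0}^{n-1}\#\{\sigma'\in S_{n-1}: n-1 \text{ not selected after the first } k \text{ arrivals}\}$ counts the permutations in which $n$ is \emph{accepted on arrival}, not $d_{n,n}$. For $n=2$ it gives $1$, while $d_2=0$. For $n=4$ it gives $16$, while $d_4=6$. The excess of $10$ is the number of $\sigma$ in which $n$ precedes $n-1$ and $n-1$ is accepted on arrival.

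The correct characterization is that $n$ is finally chosen iff $n-1$ is never selected at or after the moment $n$ arrives. Equivalently, $n$ must be inserted \emph{after} the selection window of $n-1$ has closed, or that window must be empty. Merely lying outside the window is not enough. The same mistake invalidates your ``reverse insertion trick'': being accepted on arrival as the last interval of a chain is not the event counted by $d_{n-1,n-1}$.

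Even with this repaired, the proposal does not yet contain a proof. The case analysis and the bijections that are supposed to make the contributions ``collapse'' are never given. The obstacle you flag yourself is exactly where the real work lies: the window of $n-1$ closes when $n-2$ arrives, whether $n-2$ is accepted then depends on $n-3$, then on $n-4$, and so on.

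The paper avoids time-dependent statuses altogether by conditioning on which interval $i$ arrives \emph{last}.
\begin{itemize}
\item For $2\le i\le n-2$, before $i$ arrives the blocks $1,\dots,i-1$ and $i+1,\dots,n$ evolve independently. Interval $i$ itself can revoke at most $i+1\le n-1$, so it cannot affect $n$. This case contributes $d_{n-1,n-i}$.
\item The cases $i=n$, $i=n-1$ and $i=1$ contribute $(n-1)!-d_{n-1}$, $d_{n-1,n-2}$ and $d_{n-1}$ respectively.
\end{itemize}
Summing gives $d_n=(n-1)!+2d_{n-1,n-2}+\sum_{j=2}^{n-3}d_{n-1,j}$. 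Subtracting $(n-1)$ times the same formula for $d_{n-1}$, and using $d_{n-1,j}=(n-1)d_{n-2,j}$ from Lemma~\ref{lem:one_step_reduction}, everything telescopes to the identity.

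Finally, your numerical check is off. You used $d_{3,1}=6$ in place of $d_{3,2}$, but $d_{3,2}=3d_2=0$. The right-hand side for $n=4$ is $0-4-6=-10$, as required.
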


\begin{proof}
    We give a formula for $d_n$. For 
        $1 \le i \le n$, 
    define
        \begin{equation}    \label{eq:def_pi}
	\begin{split}
            p_i &= \#\{\sigma \in S_n: \text{ interval } n \text{ is  chosen by \RtoL,}  \\
			&\text{ while interval } i \text{ is the last interval of the input} \}.
	\end{split}
        \end{equation}
    Then,
        $d_n = p_1 + \ldots + p_n$.
    Below we give formulas for computing $p_i$.

 	(i) $p_n = (n-1)! - d_{n-1}$.             This is because interval $n$ is chosen iff interval $n-1$ is \emph{not} chosen among intervals
                $1, \ldots, (n-1)$.

       (ii)  $p_{n-1} = (n-1) d_{n-2} = d_{n-1,n-2}$.       This is because in this case interval $n$ is chosen if and only if interval $n-2$ is chosen among intervals 
                $1, \ldots, (n-2)$, 
            and we have the multiplicative factor $n-1$ because interval $n$ can be arbitrarily placed in any of the $n-1$ positions.

        (iii)  for $2 \le i \le n-2$, 
            $p_i = (n-1)\cdots(n-i+1)d_{n-i} = d_{n-1,n-i}$.
            
            Consider the two blocks of intervals: intervals 
                $1, \ldots, (i-1)$, 
            and intervals 
                $(i+1), \ldots, n$. 
            Since interval $i$ is not present when these $n-1$ intervals appear, we know that these two blocks of intervals do not interfere with each other. Hence, interval $n$ is chosen if and only if it is chosen among intervals 
                $(i+1), \ldots, n$. 
            But this is equivalent to interval $n-i$ is chosen among intervals 
                $1, \ldots, (n-i)$. 
            Since the other block of intervals 
                $1, \ldots, (i-1)$ 
            can be arbitrarily placed in $n-1$ positions, we have the multiplicative factor 
                $(n-1)\cdots(n-i+1)$.

        (iv)  $p_1 = d_{n-1}$.      This is because in this case interval $n$ is chosen if and only if interval $n$ is chosen among intervals 
                $2, \ldots, n$, 
            which is equivalent to interval $n-1$ is chosen among intervals 
                $1, \ldots, (n-1)$.

Hence, 
        $d_n
        = (n-1)! + 2d_{n-1,n-2} + \sum_{j=2}^{n-3} d_{n-1,j}$.
    With this, we have 
\begin{align*}
        &d_{n,n} - d_{n,n-1}= d_{n,n} - n d_{n-1,n-1} = d_{n,n} - (n-1) d_{n-1,n-1} - d_{n-1,n-1} \\
        &= \left( (n-1)! + 2d_{n-1,n-2} + \sum_{j=2}^{n-3} d_{n-1,j} \right) \\
        &\phantom{==} - (n-1) \left( (n-2)! + 2d_{n-2,n-3} + \sum_{j=2}^{n-4} d_{n-2,j} \right) 
        - d_{n-1,n-1}   \\
        &= \left( (n-1)! + 2d_{n-1,n-2} + \sum_{j=2}^{n-3} d_{n-1,j} \right) \\
        &\phantom{==} - \left( (n-1)! + 2d_{n-1,n-3} + \sum_{j=2}^{n-4} d_{n-1,j} \right)
        - d_{n-1,n-1}   \\
        &= 2d_{n-1,n-2} - d_{n-1,n-1} - d_{n-1,n-3}.	\qedhere
\end{align*}
\end{proof}

\begin{corollary}  \label{thm:formula_dn}
    For $n\ge 3$, 
        $d_n = (n-1)(D_{n-2} + d_{n-2})$.
\end{corollary}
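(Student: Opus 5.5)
The plan is to take the exact decomposition $d_n = p_1 + \cdots + p_n$ from the proof of Proposition~\ref{thm:formula_dn_difference}, substitute the four formulas (i)--(iv) for the $p_i$, and then use Lemma~\ref{lem:one_step_reduction} to pull out a common factor $n-1$.

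\textbf{Step 1: collect the $p_i$.} Substituting (i)--(iv) gives
\begin{align*}
d_n &= \bigl((n-1)! - d_{n-1}\bigr) + d_{n-1,n-2} + \sum_{i=2}^{n-2} d_{n-1,n-i} + d_{n-1} \\
&= (n-1)! + d_{n-1,n-2} + \sum_{j=2}^{n-2} d_{n-1,j}.
\end{align*}
In the second line the terms $\pm d_{n-1}$ cancel, and the sum is re-indexed by $j=n-i$. This agrees with the expression in the proof of the proposition once its $2d_{n-1,n-2}$ is split off.

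\textbf{Step 2: factor out $n-1$.} Every index appearing here satisfies $j \le n-2$. Hence Lemma~\ref{lem:one_step_reduction}, applied with $n-1$ in place of $n$, gives $d_{n-1,j} = (n-1)\, d_{n-2,j}$, and in particular $d_{n-1,n-2} = (n-1)\, d_{n-2}$. Writing also $(n-1)! = (n-1)(n-2)!$, we get
\[
d_n = (n-1)\Bigl( (n-2)! + d_{n-2} + \sum_{j=2}^{n-2} d_{n-2,j} \Bigr)
    = (n-1)\bigl( (n-2)! - d_{n-2,1} + D_{n-2} + d_{n-2} \bigr).
\]

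\textbf{Step 3: show $d_{m,1} = m!$ for every $m \ge 1$.} This says interval $1$ is selected by \RtoL under every arrival order, and it is the one genuinely new ingredient.
\begin{itemize}
\item Interval $1$ has no interval overlapping it on its left. So when it arrives it is never rejected, and it is accepted.
\item The only interval that could later revoke it is interval $2$. But when interval $2$ arrives, the selected interval $1$ lies on its left ($s_2 < f_1 < f_2$), so \RtoL rejects interval $2$.
\end{itemize}
Therefore interval $1$ is always in the final selection, so $d_{m,1} = m!$. With $m = n-2$, the terms $(n-2)!$ and $-d_{n-2,1}$ cancel, which yields $d_n = (n-1)(D_{n-2} + d_{n-2})$.

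The main obstacle is not the algebra but checking that the case analysis (i)--(iv) is valid for small $n$, since for $n=3$ the range in (iii) is empty and the cases $i=1$ and $i=n-1$ must be distinct. For $n=3$ the cases are $i = 1, 2, 3$, and they are distinct and cover everything. Case (ii) at $n=3$ uses $d_1$, which is meaningful. As a sanity check, the identity at $n=3$ reads $d_3 = 2(D_1 + d_1) = 2(1+1) = 4$, and at $n=4$ it reads $d_4 = 3(D_2 + d_2) = 3 \cdot 2 = 6$. Both agree with Table~\ref{tab:data}.
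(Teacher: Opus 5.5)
Your proof is correct, and it takes a more direct route than the paper. The paper proves this corollary by induction on $n$, reading the base case off Table~\ref{tab:data}. It writes $D_{n-2} = (n-2)D_{n-3} + d_{n-2}$ using Lemma~\ref{lem:one_step_reduction}, substitutes the induction hypothesis for $d_{n-1}$, and reduces the claim to the difference identity of Proposition~\ref{thm:formula_dn_difference}.

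You instead return to the explicit sum $d_n = p_1 + \cdots + p_n$ that underlies that proposition and simplify it in closed form. Factoring out $n-1$ with the Lemma leaves $(n-2)! - d_{n-2,1} + D_{n-2} + d_{n-2}$. The $(n-2)!$ then cancels because $d_{n-2,1} = (n-2)!$.

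\textbf{What your route buys.} It needs neither induction nor the Proposition, and no base case from the table. It also shows that the corollary is just a rewriting of the formula for $d_n$, with the $(n-1)!$ term accounted for by the always-selected interval $1$.

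\textbf{What it costs.} You need (i)--(iv) at $n=3$, whereas the paper only uses them inside a proof stated for $n\ge 4$. You rightly check this. What is needed is $n\ge 3$. Then the cases $i=1$ and $i=n-1$ are distinct, and in (iv) the late arrival of interval $1$ cannot revoke interval $n$.

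\textbf{The paper's side.} Its route costs little within its own development, because Proposition~\ref{thm:formula_dn_difference} is needed anyway for Corollary~\ref{cor:dn_an}.

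\textbf{A small remark.} The fact $d_{m,1} = m!$ is not really a new ingredient. It is Lemma~\ref{lem:one_step_reduction} with $i=1$ and $d_1 = 1$. Your direct argument that \RtoL always keeps interval $1$ is nonetheless correct.
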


\begin{proof}
	We use induction.  The base case can be verified by Table \ref{tab:data}. Now assume the claim is true for $n-1$, i.e., $d_{n-1} = (n-2)(D_{n-3} + d_{n-3})$, we prove it for $n$.  By Lemma \ref{lem:one_step_reduction} and the induction hypothesis, one has
       	$D_{n-2} = (n-2)D_{n-3} + d_{n-2}  = 	d_{n-1} - (n-2)d_{n-3} + 	d_{n-2}$.
	It suffices to check this is equal to
		$d_n/(n-1) - d_{n-2}$.
	Indeed, by Lemma \ref{lem:one_step_reduction}, one has
	$d_n - (n-1) d_{n-2} = d_n - d_{n-1, n-2}$,
and on the other hand, 
	$(n-1)D_{n-2} = (n-1)(d_{n-1} - (n-2)d_{n-3} + 	d_{n-2}) = (n-1)d_{n-1} - d_{n-1, n-3} + d_{n-1, n-2}$. 
Their difference is 
	$d_n - 2d_{n-1, n-2} - (n-1)d_{n-1} +d_{n-1, n-3} = 0$
by Proposition \ref{thm:formula_dn_difference}, as desired.
\end{proof}

Consider an auxiliary integer sequence $a_n = a_{n-1} + (n-1)a_{n-2}$, with initial conditions $a_0 = a_1 = 1$. The first few numbers  are $1, 1, 2, 4, 10, 26, 76, 232, \ldots$. This sequence is known to represent the number of self-inverse permutations as well as other  combinatorial objects \cite{oeis2024involutions}. We use  the recursive definition of $a_n$.

\begin{corollary}   \label{cor:dn_an}
    $d_{n,n} - d_{n,n-1} = (-1)^{n-1} a_n$.   
\end{corollary}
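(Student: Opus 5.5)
The plan is to turn Proposition~\ref{thm:formula_dn_difference} into a two-term linear recurrence for the differences $e_n := d_{n,n} - d_{n,n-1}$, and then to compare that recurrence with the recursion defining $a_n$, up to a sign.

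\textbf{Step 1: rewrite the right-hand side of Proposition~\ref{thm:formula_dn_difference}.} I split the middle term and regroup:
\[
2d_{n-1,n-2} - d_{n-1,n-1} - d_{n-1,n-3} = -\bigl(d_{n-1,n-1} - d_{n-1,n-2}\bigr) + \bigl(d_{n-1,n-2} - d_{n-1,n-3}\bigr).
\]
The first bracket is exactly $e_{n-1}$.

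\textbf{Step 2: identify the second bracket.} For the second bracket I use Lemma~\ref{lem:one_step_reduction} in its one-step form $d_{m,i} = m\, d_{m-1,i}$ for $i \le m-1$. With $m = n-1$ and $i = n-2, n-3$, both of which are at most $n-2$, this gives
\[
d_{n-1,n-2} - d_{n-1,n-3} = (n-1)\bigl(d_{n-2,n-2} - d_{n-2,n-3}\bigr) = (n-1)\, e_{n-2}.
\]
For $n \ge 4$ the index $n-3 \ge 1$, so the lemma applies. Combining the two steps, for $n \ge 4$,
\[
e_n = -e_{n-1} + (n-1)\, e_{n-2}.
\]

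\textbf{Step 3: remove the sign.} Set $f_n := (-1)^{n-1} e_n$ and multiply the recurrence by $(-1)^{n-1}$. Since $(-1)^{n-1}(-e_{n-1}) = (-1)^{n-2} e_{n-1} = f_{n-1}$ and $(-1)^{n-1} e_{n-2} = (-1)^{n-3} e_{n-2} = f_{n-2}$, this gives
\[
f_n = f_{n-1} + (n-1)\, f_{n-2}, \qquad n \ge 4.
\]
This is the same recursion as $a_n$.

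\textbf{Step 4: base cases and induction.} It remains to check two consecutive starting values, which I read off Table~\ref{tab:data}.
\begin{itemize}
\item $e_2 = -2 = (-1)^{1} a_2$.
\item $e_3 = 4 = a_3$.
\end{itemize}
An induction on $n \ge 4$ then gives $f_n = a_n$ for all $n \ge 2$. For $n=1$ the statement reads $1 = a_1$, using the convention $d_{1,0} = 0$, which is also the table entry.

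The only delicate point is the index bookkeeping in Step~2. One must confirm that Lemma~\ref{lem:one_step_reduction} is applied only with $i \le m-1$, and that the proposition's range $n \ge 4$ matches base cases at $n = 2, 3$. Beyond that, the argument is a direct substitution.
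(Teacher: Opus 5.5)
Your proof is correct and follows the paper's argument essentially step for step. You use the same regrouping of the right-hand side of Proposition~\ref{thm:formula_dn_difference}, the same application of Lemma~\ref{lem:one_step_reduction} to extract the factor $(n-1)$, and the same base-case check from the table. Your version is somewhat more careful than the paper's on two points: it makes the sign bookkeeping explicit, and it notes that the proposition is stated only for $n\ge 4$, so the base cases needed are $n=2,3$ (with $n=1$ by the convention $d_{1,0}=0$).
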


\begin{proof}
    The base case can be verified from Table \ref{tab:data}. Then, it suffices to prove 
        $a_n = (-1)^{n-1} (d_{n,n} - d_{n,n-1})$
    satisfy the recurrence formula of $a_n$, i.e.,
        $a_n = a_{n-1} + (n-1) a_{n-2}$.
    Indeed, by Proposition \ref{thm:formula_dn_difference} and Lemma \ref{lem:one_step_reduction},
    \begin{align*}
        d_{n,n} - d_{n,n-1}
        &= 2d_{n-1,n-2} - d_{n-1,n-1} - d_{n-1,n-3} \\
        &= - (d_{n-1,n-1} - d_{n-1,n-2}) + d_{n-1,n-2} - d_{n-1,n-3} \\
        &= - (d_{n-1,n-1} - d_{n-1,n-2}) + (n-1) (d_{n-2,n-2} - d_{n-2,n-3}).   \qedhere
    \end{align*}
\end{proof}

\begin{corollary}   \label{cor:Dn_recursion}
    $D_n = (n+1) D_{n-1} + (-1)^{n-1} a_{n-1}$.
\end{corollary}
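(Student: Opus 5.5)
We prove Corollary~\ref{cor:Dn_recursion} by expressing $D_n$ through $D_{n-1}$ with Lemma~\ref{lem:one_step_reduction}, and then using Corollary~\ref{cor:dn_an} to turn the one leftover term into $a_{n-1}$.

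\textbf{Step 1: split off the last two terms.} Since $D_n = \sum_{i=1}^n d_{n,i}$, Lemma~\ref{lem:one_step_reduction} gives $d_{n,i} = n\, d_{n-1,i}$ for $1 \le i \le n-1$. Hence
\[
D_n = n\sum_{i=1}^{n-1} d_{n-1,i} + d_{n,n} = n D_{n-1} + d_n .
\]

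\textbf{Step 2: compare $d_n$ with $D_{n-1}$.} The target is now equivalent to
\[
d_n = D_{n-1} + (-1)^{n-1} a_{n-1}.
\]
By Corollary~\ref{cor:dn_an}, $(-1)^{n-1}a_{n-1} = -(d_{n-1,n-1}-d_{n-1,n-2})$. So we must show
\[
d_n = D_{n-1} - d_{n-1} + d_{n-1,n-2},
\]
that is,
\[
d_n = \sum_{j=1}^{n-2} d_{n-1,j} + d_{n-1,n-2}.
\]

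\textbf{Step 3: use the formula for $d_n$.} The proof of Proposition~\ref{thm:formula_dn_difference} gives
\[
d_n = (n-1)! + 2d_{n-1,n-2} + \sum_{j=2}^{n-3} d_{n-1,j}.
\]
Comparing with the target, the identity reduces to $(n-1)! = d_{n-1,1}$.

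This holds by Lemma~\ref{lem:one_step_reduction}: $d_{n-1,1} = (n-1)(n-2)\cdots 2\, d_1 = (n-1)!$, since $d_1=1$. It is also intuitive, because interval $1$ has nothing to its left, so \RtoL never rejects or revokes it, and it is always chosen.

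\textbf{Range of validity and the main obstacle.} The argument needs the $d_n$ formula, valid for $n\ge 4$, and Corollary~\ref{cor:dn_an}, which holds wherever $d_{n-1,n-2}$ is defined. Small $n$ (say $n=2,3,4$) I would check directly against Table~\ref{tab:data}. For example, $D_3 = 4\cdot 2 + 2 = 10$ ✓ and $D_4 = 5\cdot 10 - 4 = 46$ ✓.

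The main obstacle is bookkeeping rather than ideas. Specifically:
\begin{itemize}
\item The sum in the $d_n$ formula starts at $j=2$, so the term $d_{n-1,1}$ has to be recovered from the $(n-1)!$.
\item The $(n-1)!$ itself comes from $p_n + p_1$.
\item The sign convention of $a_{n-1}$ must match Corollary~\ref{cor:dn_an} at index $n-1$.
\end{itemize}
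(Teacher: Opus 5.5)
Your proof is correct, but it reaches the key intermediate identity $d_n = D_{n-1} - d_{n-1,n-1} + d_{n-1,n-2}$ by a different route from the paper's.

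The paper gets this identity from Corollary~\ref{thm:formula_dn}, $d_n=(n-1)(D_{n-2}+d_{n-2})$. It rewrites that with Lemma~\ref{lem:one_step_reduction} as $(n-1)D_{n-2}=D_{n-1}-d_{n-1,n-1}$ and $(n-1)d_{n-2}=d_{n-1,n-2}$, then finishes with Corollary~\ref{cor:dn_an}. You instead go back to the explicit $p_i$-decomposition inside the proof of Proposition~\ref{thm:formula_dn_difference}. Your new ingredient is to absorb the $(n-1)!=p_1+p_n$ term as $d_{n-1,1}$.

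Your route bypasses Corollary~\ref{thm:formula_dn} and its induction entirely. In fact, your observation yields a direct, induction-free proof of that corollary for $n\ge 4$, since by the lemma $\sum_{j=1}^{n-2}d_{n-1,j}+d_{n-1,n-2}=(n-1)(D_{n-2}+d_{n-2})$. It also explains the $(n-1)!$ combinatorially: interval $1$ is always kept.

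The cost is that you depend on a formula that appears only inside a proof and holds only for $n\ge 4$, so $n=2,3$ must be checked by hand, as you note. Given the established chain of corollaries, the paper's version is a three-line computation.
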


\begin{proof}
    By Corollary \ref{thm:formula_dn}, Lemma \ref{lem:one_step_reduction}, and Corollary \ref{cor:dn_an}, we have
    \begin{align*}
        D_n 
        &= \sum_{i=1}^n d_{n,i} 
        = n \left(\sum_{i=1}^{n-1} d_{n-1,i} \right) + d_{n,n} 
        = n D_{n-1} + (n-1)(D_{n-2} + d_{n-2}) \\
        &= n D_{n-1} + D_{n-1} - d_{n-1,n-1} + d_{n-1,n-2}  = (n+1) D_{n-1} + (-1)^{n-1} a_{n-1}.  \qedhere
    \end{align*}
\end{proof}

\begin{corollary}
\label{cor:Dn_ratio}
    $\frac{D_n}{(n+1)!} = \sum_{i=0}^{n-1} \frac{(-1)^i a_i}{(i+2)!}$.
\end{corollary}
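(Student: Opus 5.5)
The plan is to divide the recursion of Corollary~\ref{cor:Dn_recursion} by $(n+1)!$ and telescope. Dividing
$D_n = (n+1) D_{n-1} + (-1)^{n-1} a_{n-1}$
by $(n+1)!$ gives
\begin{equation*}
\frac{D_n}{(n+1)!} = \frac{D_{n-1}}{n!} + \frac{(-1)^{n-1} a_{n-1}}{(n+1)!}.
\end{equation*}
Write $E_n := D_n/(n+1)!$. The increment $E_n - E_{n-1}$ is then exactly the $i=n-1$ term $(-1)^{i} a_i/(i+2)!$ of the claimed sum.

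Summing these increments from the first index at which the recursion is valid up to $n$ gives $E_n$ as a base value plus the tail of the sum. It remains to check that the base value equals the initial part of the sum.

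For $n=1$: $D_1/2! = 1/2$, and the sum has the single term $a_0/2! = 1/2$, so the claim holds. This lets the induction start at $n=1$, provided the recursion holds for every $n\ge 2$.

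The derivation of Corollary~\ref{cor:Dn_recursion} used Corollary~\ref{thm:formula_dn}, which is stated only for $n\ge 3$. So I would verify the recursion at $n=2$ (and any other small case not covered) directly from Table~\ref{tab:data}. At $n=2$: $D_2=2$ and $(n+1)D_1 + (-1)^{1}a_1 = 3-1 = 2$. At $n=3$ it reads $10 = 4\cdot 2 + a_2 = 8+2$, which also matches.

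With the base case and the recursion for all $n\ge2$ in hand, induction on $n$ finishes the proof. Suppose
$E_{n-1} = \sum_{i=0}^{n-2} (-1)^i a_i/(i+2)!$.
Adding the increment $(-1)^{n-1}a_{n-1}/(n+1)!$ gives the formula for $E_n$.

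The only real obstacle is making sure the recursion of Corollary~\ref{cor:Dn_recursion} is valid down to the base index, given that it was derived through results stated for $n\ge3$ or $n\ge4$. I would handle this by checking the small cases directly against Table~\ref{tab:data}, as above. Everything else is routine telescoping.
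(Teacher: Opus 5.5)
Your proposal is correct and matches the paper's proof: both divide the recursion of Corollary~\ref{cor:Dn_recursion} by $(n+1)!$ and induct from the base case $n=1$. Your direct check of the recursion at $n=2$ against Table~\ref{tab:data} is a bit of extra care the paper leaves implicit, since Corollary~\ref{cor:Dn_recursion} is derived via Corollary~\ref{thm:formula_dn}, which is stated only for $n\ge 3$.
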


\begin{proof}
    We prove this by induction. The base case $n=1$ is trivial. Assume it is true for $n=k$. The induction step for $n=k+1$ can be verified by expanding $D_{k+1}$ via Corollary \ref{cor:Dn_recursion} as follows,
    \begin{align*}
    \frac{D_{k+1}}{(k+2)!} &= \frac{(k+2)D_k + (-1)^k a_k}{(k+2)!} = \frac{D_{k}}{(k+1)!} + \frac{(-1)^k a_k}{(k+2)!} \\
    &= \sum_{i=0}^{k-1} \frac{(-1)^i a_i}{(i+2)!} +\frac{(-1)^k a_k}{(k+2)!} =
    \sum_{i=0}^k \frac{(-1)^i a_i}{(i+2)!}. \qedhere
     \end{align*}
\end{proof}

\begin{proposition} \label{lem:lim}
	$\lim_{n\to \infty} \frac{D_n}{(n+1)!}= 1 - e^{-\frac{1}{2}}$.
\end{proposition}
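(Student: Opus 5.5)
By Corollary \ref{cor:Dn_ratio}, the limit equals the value of the infinite series $\sum_{i\ge 0} \frac{(-1)^i a_i}{(i+2)!}$, provided it converges. So the plan is to evaluate this series through the exponential generating function of the involution numbers $a_n$. That function is classical and also follows from the recurrence $a_n = a_{n-1} + (n-1)a_{n-2}$:
\[
A(x) := \sum_{n\ge 0} \frac{a_n}{n!} x^n = e^{x + x^2/2}.
\]
To derive it, multiply the recurrence by $x^{n-1}/(n-1)!$ and sum over $n \ge 1$. This gives $A'(x) = (1+x)A(x)$, and with $A(0)=1$ the formula follows.

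Next I would rewrite the series as an integral. Since $\frac{1}{(i+2)!} = \frac{1}{i!}\int_0^1 (1-t)\, t^{i}\, dt$, we get
\[
\sum_{i\ge 0} \frac{(-1)^i a_i}{(i+2)!} = \int_0^1 (1-t) \sum_{i\ge0} \frac{a_i (-t)^i}{i!}\, dt = \int_0^1 (1-t)\, e^{-t + t^2/2}\, dt .
\]
The integrand is an exact derivative: $\frac{d}{dt} e^{-t+t^2/2} = (t-1)e^{-t+t^2/2}$. The integral is therefore $\bigl[-e^{-t+t^2/2}\bigr]_0^1 = 1 - e^{-1/2}$, which is the claimed limit.

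The main obstacle is justifying the interchange of sum and integral, together with convergence of the series. Here $A(x)=e^{x+x^2/2}$ is entire, so its power series converges absolutely and uniformly on $[-1,0]$. Termwise integration against the bounded weight $(1-t)$ on $[0,1]$ is therefore legitimate. This also shows that the partial sums in Corollary \ref{cor:Dn_ratio} converge. For an alternative check, the crude bound $a_i \le 2^{i}\sqrt{i!}$ (provable by induction from the recurrence) gives $a_i/(i+2)! \to 0$ summably, which yields absolute convergence directly. Combining these steps with Corollary \ref{cor:Dn_ratio} gives $\lim_{n\to\infty} D_n/(n+1)! = 1-e^{-1/2}$.
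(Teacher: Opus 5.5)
Your proof is correct and takes the same approach as the paper: you reduce to the series in Corollary~\ref{cor:Dn_ratio}, identify the exponential generating function $e^{x+x^2/2}$, insert $\frac{1}{(i+1)(i+2)}=\int_0^1 t^i(1-t)\,dt$, and integrate the exact derivative. The differences are minor improvements: you obtain the first-order equation $A'=(1+x)A$ directly (the paper solves the second-order equation $G''=(1+x)G'+G$, which is its derivative), and you justify the sum--integral interchange by uniform convergence, which the paper leaves implicit.
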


\begin{proof}
Let
    $S = \lim_{n\to \infty} \frac{D_n}{(n+1)!}$. 
By Corollary \ref{cor:Dn_ratio},
    $S=\sum_{n=0}^{\infty} \frac{(-1)^n a_n}{(n+2)!}$.
We will employ generating functions and integral representations. Let 
    $G(x) = \sum_{n=0}^{\infty} \frac{a_n x^n}{n!}$ 
denote the exponential generating function for the sequence $\{a_n\}$. We claim 
        $G''(x) = (1+x) G'(x) + G(x)$. 
    Indeed, a direct calculation gives
        $G'(x) = \sum_{n=0}^\infty \frac{a_{n+1}}{n!}x^n$
    and
        $G''(x) = \sum_{n=0}^\infty \frac{a_{n+2}}{n!}x^n$.
    Hence, by the formula 
        $a_n = a_{n-1} + (n-1)a_{n-2}$,
    we have
    \begin{align*}
        G''(x) - G'(x)
        &= \sum_{n=0}^\infty \frac{a_{n+2} - a_{n+1}}{n!}x^n    
        = \sum_{n=0}^\infty \frac{(n+1)a_n}{n!}x^n  \\
        &= \left( \sum_{n=0}^\infty \frac{a_n}{n!}x^{n+1}  \right)' 
        = ( x G(x) )'
        = G(x) + xG'(x),
    \end{align*}
    as claimed.
    Solving the differential equation 
with the boundary condition
        $G(0) = 1$
    and 
        $G'(0) = 1$,
    we get $G(x) = e^{\frac{x(x+2)}{2}}$.

Next, we express $S$ in terms of the generating function $G(x)$. Observe that
    $S = \sum_{n=0}^{\infty} \frac{(-1)^n a_n}{n!} \cdot \frac{1}{(n+1)(n+2)}$.
To relate this to the generating function $G(x)$, we use the integral representation
    $\frac{1}{(n+1)(n+2)} = \int_0^1 x^n (1 - x) \, dx$.
Substituting this into the expression for $S$, we obtain
    $S = \sum_{n=0}^{\infty} \frac{(-1)^n a_n}{n!} \int_0^1 x^n (1 - x) \, dx = \int_0^1 (1 - x) \sum_{n=0}^{\infty} \frac{a_n (-x)^n}{n!} \, dx$.
Recognizing the sum inside the integral as the generating function $G(x)$ evaluated at $-x$, we have
	$\sum_{n=0}^{\infty} \frac{a_n (-x)^n}{n!} = G(-x) = e^{-x + \frac{x^2}{2}}$.
Therefore,
$S = \int_0^1 (1 - x) e^{-x + \frac{x^2}{2}} \, dx = 1 - e^{-\frac{1}{2}}$.
\end{proof}

\begin{proof}[Proof of Theorem \ref{thm:rtol}]
    Recall that by definition $D_n/n!$ is the expected number of intervals chosen by \RtoL. 
Since \OPT is always $\lceil n/2 \rceil$, by  Proposition \ref{lem:lim}, 
    \[
        \lim_{n\to \infty} \frac{D_n}{n!\lceil n/2 \rceil} = \lim_{n\to \infty} \frac{2D_n}{(n+1)!} = 2(1-\frac{1}{\sqrt{e}}).    \qedhere
    \]    
\end{proof}

We remark that the actual competitive ratio sequence $D_n/(n!\lceil n/2 \rceil)$, although itself alternate between odd and even terms,  both the odd subsequence and the even subsequence decrease monotonically to the limit.

\section{The adversarial model and the advice complexity}

To complement Theorem \ref{thm:rtol}, we prove the following upper bound for the adversarial model. Incidentally, the construction also leads to a lower bound for advice complexity.

\begin{figure}[ht!]        
    \centering
    \includegraphics[scale=.9]{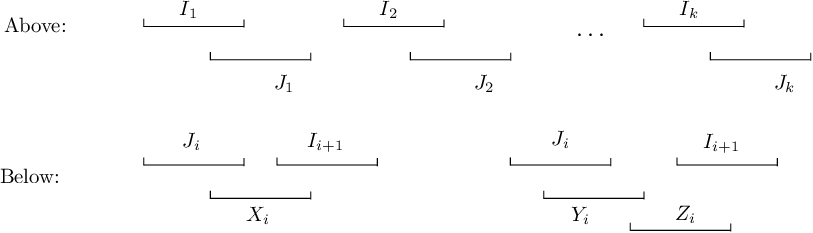}
    \caption{Above: the first $2k$ intervals. Below: Connect $J_i$ and $I_{i+1}$ by putting one interval or two intervals, depending on the \textit{gap decision} of \ALG.} \label{fig:pairs_gaps}   
\end{figure}

\begin{theorem}
    \label{thm:interval_rev_ub}
    No deterministic algorithm with revoking can achieve an asymptotic competitive ratio better than $3/4$ in the adversarial model. 
\end{theorem}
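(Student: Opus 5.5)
We will prove Theorem~\ref{thm:interval_rev_ub} by an adaptive adversary that builds a simple chain in stages, following the structure suggested by Figure~\ref{fig:pairs_gaps}.

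In the first phase the adversary presents $k$ pairwise disjoint, widely separated pairs. Each pair consists of two overlapping unit intervals $I_i$ and $J_i$, with $I_i$ to the left of $J_i$, so each pair is a chain of length two. Any non-overlapping solution contains at most one interval from each pair. We may assume \ALG holds exactly one interval of each pair at the end of this phase: holding none is never better, since the adversary can treat an empty pair as either choice.

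In the second phase the adversary closes the gap between $J_i$ and $I_{i+1}$, one gap at a time from left to right. The filler is chosen adaptively according to \ALG's current selection in the two adjacent pairs, which we call the \emph{gap decision}. The filler is either one interval overlapping both $J_i$ and $I_{i+1}$, or two consecutive intervals, the first overlapping $J_i$ and the second overlapping $I_{i+1}$. Either way the final input is a simple chain whose \OPT is determined by its length: $\lceil n/2\rceil$, as used in the proof of Theorem~\ref{thm:rtol}.

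The adversary's rule is chosen so that the parity makes \ALG's current configuration locally wasteful. If \ALG holds $J_i$ and $I_{i+1}$, the adversary inserts two intervals. An optimal segment can use one of them, but \ALG must either ignore both or revoke. If \ALG holds $I_i$ and $J_{i+1}$, or a mixed configuration, the adversary inserts one interval so that the local segment length is odd. The key invariant to establish is an accounting per block, where a block is one pair plus its following gap. For each block, \OPT gains one more interval than \ALG on at least some fixed fraction of gaps, or \ALG ends with at most $3/4$ of the local optimum.

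I will set this up as a potential argument. Whenever \ALG revokes an interval to fix a gap, it shifts the parity problem to the neighboring gap. That neighboring gap is created later and can therefore be adapted again. Hence revocations never recover more than a constant total, and the lost interval is only pushed rightward.

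The expected main obstacle is the revocation power. \ALG may revoke arbitrarily far back and reconfigure a whole segment once it sees a gap filled. To control this, I would present the gap filler intervals in an order that forces \ALG to commit before the segment is closed. The adversary first reveals one interval of a two-interval filler, observes whether \ALG takes it (revoking a neighbor) or rejects it, and only then decides whether a second interval is added. Each revealed interval is then handled by a short case analysis over \ALG's at most four local states: holding or not holding each of the adjacent endpoints and the new interval.

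In each case the goal is to show that \ALG's count in the completed block is at most $3$ while \OPT's is $4$ over every two consecutive blocks, up to an additive constant. Summing over blocks gives $\ALG \le \tfrac34\OPT + O(1)$, which proves the theorem.

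For the advice corollary, the $k$ gap decisions are binary and independent. Any algorithm achieving optimality must effectively know each of them, which yields roughly $n/4$ bits, since each decision uses about four intervals.
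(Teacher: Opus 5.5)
\textbf{The adversary rule fails in the RR case.} When \ALG holds $I_i$ and $J_{i+1}$, it holds neither $J_i$ nor $I_{i+1}$ (an RR gap), and you insert a single filler there. That filler overlaps only the two unselected intervals $J_i$ and $I_{i+1}$. So \ALG accepts it without revoking anything, and the stretch $I_i,J_i,X_i,I_{i+1},J_{i+1}$ is selected optimally.

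Concretely, let \ALG keep $I_i$ for odd $i$ and $J_i$ for even $i$. The gaps then alternate between RR (one filler, which \ALG takes) and AA (two fillers, which \ALG ignores). Every two blocks contain $4+1+2=7$ intervals, of which \ALG holds $3$. This gives $\ALG\approx 1.5k$ against $\OPT=\lceil n/2\rceil\approx 1.75k$, a ratio tending to $6/7>3/4$. Your construction therefore cannot prove the theorem, and your target invariant (\ALG at most $3$, \OPT equal to $4$ per two blocks) fails under your own rule.

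The paper avoids this by inserting \emph{two} fillers in the RR case as well as in the AA case, and one filler only in the mixed cases. Then:
\begin{itemize}
\item An RR gap adds two intervals but at most one to \ALG, while AA and mixed gaps add nothing to \ALG. Hence $\ALG\le k+k_{RR}$.
\item Each pair holds exactly one interval, so any two RR gaps are separated by an AA gap. Hence $n\ge 2k+(k-1)+(2k_{RR}-1)$.
\item Finally $k_{RR}\le k/2$, which gives $2(k+k_{RR})/n\le 3k/(4k-2)\to 3/4$.
\end{itemize}

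\textbf{The accounting is never carried out, and no local invariant can replace it.} You leave it as an unproven potential argument. Even for the corrected construction, no per-two-block invariant holds: an RR block followed by a mixed block has $7$ intervals of which \ALG holds $3$. The bound must be global, coming from the alternation of RR and AA gaps.

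\textbf{Revocation is not the obstacle you think.} A rejected or revoked interval is gone for good, so after phase one \ALG can only trade pair intervals for fillers.
\begin{itemize}
\item A filler in a mixed or AA gap can be accepted only by revoking the held pair interval adjacent to that gap.
\item Each pair's held interval is adjacent to at most one gap, so such trades never increase the count.
\item In an RR gap, at most one of the two fillers can be held.
\end{itemize}
This simple charging gives $\ALG\le k+k_{RR}$ directly. The adversary can fix all fillers at the end of phase one, so neither gap-by-gap adaptivity nor one-at-a-time revelation of fillers is needed.
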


\begin{proof}
    Fix a deterministic revoking algorithm \ALG. The adversarial input begins with $k$ pairs of intervals $(I_1, J_1), (I_2, J_2), \ldots, (I_k, J_k)$ such that $I_i$ overlaps $J_i$ on the left, and different pairs are disjoint, see Figure \ref{fig:pairs_gaps}. Because of revoking, without loss of generality we can assume that \ALG chooses exactly one interval in each pair, hence, up to this point at most $k$ intervals are selected.

Consider the \ALG decision on $J_i$ and $I_{i+1}$ and we call it $i^\text{th}$ \textit{gap decision}. 
Let A denote accept and R reject,  a gap decision can be: RA, AR, RR, or AA. In the case of RA or AR, we introduce a new interval $X_i$ that intersects only these two intervals, see Figure \ref{fig:pairs_gaps}. In the case of RR or AA, we introduce two new intervals $Y_i$ and $Z_i$, where $Y_i$ intersects only $J_i$ and $Z_i$, and $Z_i$ intersects only $Y_i$ and $I_{i+1}$, see Figure \ref{fig:pairs_gaps}. It is easy to see that only in the case of RR, by selecting new interval $Y_i$ or $Z_i$, the number of selected intervals can increase, and by at most 1. Let $k_{RR}$ be the number of RR gap decisions, then the number of selected intervals by \ALG is $\le k + k_{RR}$. Two consecutive gap decisions cannot both be RR (otherwise, there would exist a pair $(I_i, J_i)$ such that both are rejected), implying $k_{RR} \leq \frac{k}{2}$. Since in each pair $(I_i, J_i)$ exactly one interval was selected, there must be an AA gap between every two RR gaps. Therefore, there are at least $2k_{RR} - 1$ gap decisions that are either RR or AA.
Let $n$ be the number of intervals. Then, 
	$n \ge 2k+(k-1)+(2k_{RR} -1) = 3k + 2k_{RR} - 2$.
Hence, the competitive ratio is 
	$\le 2(k+k_{RR})/n$,
satisfying
	$k_{RR} \le k/2$ and $3k + 2k_{RR} -2 \leq n$.
Optimizing this we get the upper bound $3/4$ as claimed. 
\end{proof}

\begin{theorem}
\label{thm:advice_lb}
    The advice complexity of  the Online Interval Selection on a simple chain is at least $n/4$.
\end{theorem}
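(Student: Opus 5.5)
We adapt the adversarial construction from the proof of Theorem \ref{thm:interval_rev_ub} into a counting argument over advice strings. The intended claim is that any algorithm reading fewer than about $n/4$ advice bits cannot be optimal, meaning it cannot select $\lceil n/2\rceil$ intervals on every chain. We will also track the weaker statement that no such algorithm achieves ratio $1$ asymptotically.

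\emph{Input family.} Fix $k$ and first present the $k$ disjoint pairs $(I_1,J_1),\dots,(I_k,J_k)$ as in Figure \ref{fig:pairs_gaps}. Then, for each of the $k-1$ gaps, insert either a single interval $X_i$ (odd gap) or two intervals $Y_i,Z_i$ (even gap). The resulting chain has $n = 2k + (k-1) + \#\{\text{even gaps}\}$ intervals. We restrict to inputs with all gaps odd, or better, to a family indexed by binary strings $b\in\{0,1\}^{k-1}$, where $b_i$ determines the gap type. All of these inputs share the same prefix of $2k$ intervals. So a deterministic algorithm with advice must have already committed, on that prefix, to a configuration of accepted intervals, and the key point is that this configuration depends only on the advice string.

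\emph{Key lemma.} I would show that, for each $b$, there is a unique (or nearly unique) prefix configuration from which optimality can still be reached. On a simple chain, an optimal solution is a maximum independent set of a path. After the prefix, the algorithm holds exactly one interval in each pair, and later revoking can only drop intervals; revoking cannot recover an interval that was rejected earlier. Reaching $\lceil n/2\rceil$ forces a parity pattern:
\begin{itemize}
\item across an odd gap (one $X_i$), the choices in consecutive pairs must be ``$J_i$ then $I_{i+1}$ not both'', i.e.\ the gap decision is AR or RA, with no loss;
\item across an even gap, the choices must be compatible with adding one of $Y_i,Z_i$, which requires RR.
\end{itemize}
Tracing the counting in the proof of Theorem \ref{thm:interval_rev_ub}, if the gap decision mismatches $b_i$, the algorithm loses one unit relative to \OPT at that gap.

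Since the pattern of $I$/$J$ choices is a string $c\in\{I,J\}^k$, I would map it to the gap decisions. Distinct $b$ should then require distinct prefix patterns $c$, up to a global choice of the first pair. This gives at least $2^{k-1}$ distinct required behaviors, so the algorithm needs at least $k-1$ advice bits. Relating $k$ to $n$ uses $n\le 4k-2$ (worst case, all gaps even), which yields $k-1\ge n/4 - O(1)$ bits. Tightening the bookkeeping should give $n/4$.

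\emph{Expected obstacle.} The difficulty is the consistency constraint: the pattern $c$ fixes the gap decisions, and an RR gap forces the next gap to be non-RR. So not every $b$ is realizable optimally, and the injection $b\mapsto c$ must be built carefully. One way to handle this is to let the adversary's family vary the gap length over $\{1,2\}$ or $\{2,3\}$ so that the optimal parity flip is exactly what is encoded. The goal is that each pattern $c\in\{I,J\}^k$ is optimal for exactly one input and no two inputs share an optimal prefix, giving $2^{k}$ inputs requiring pairwise different advice. The first thing I would try is to choose the gap lengths so that the required choice in pair $i+1$ equals the choice in pair $i$ XOR $b_i$. I expect this to work with lengths $1$ versus $2$: choosing $J_i$ followed by a length-$2$ gap forces $I_{i+1}$ to be chosen, while a length-$1$ gap forces $J_{i+1}$. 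With that, the count is $2^{k-1}$ inputs and $n\le 4k$, hence at least $n/4 - 1$ bits.
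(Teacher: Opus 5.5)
Your skeleton matches the paper's: a common prefix of $k$ pairs, gap lengths $\ell_i\in\{1,2\}$ encoding $b\in\{0,1\}^{k-1}$, parity forcing, and counting advice strings. But the step everything rests on is false for the family as you define it. You need each $b$ to have a unique optimal prefix pattern, so that $b\mapsto c$ is injective.

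You leave each chain at its natural length $n_b=2k+\sum_i\ell_i$. Whenever $n_b$ is even, the maximum independent set of the path is not unique: it may take odd positions up to some point and even positions afterwards. Already for $k=2$, the chains $I_1,J_1,X_1,I_2,J_2$ and $I_1,J_1,Y_1,Z_1,I_2,J_2$ are both solved optimally by one advice-free algorithm. It holds $\{I_1,J_2\}$ after the prefix and then greedily adds $X_1$, respectively one of $Y_1,Z_1$. So the claimed $2^{k-1}$ distinct behaviours are not needed. In general, switching alignment inside gap $m$ of an even-length input reproduces exactly the forced pattern of the input obtained by flipping $b_m$.

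Your inputs also have lengths ranging from $3k-1$ to $4k-2$, whereas advice complexity bounds the bits read at a fixed $n$. The missing idea is the paper's padding. Append intervals at the right end so that every input has exactly $4k+1$ intervals, which is possible since $n_b\le 4k-2$. Then $n$ is fixed and odd, and the unique optimum is the set of odd-numbered intervals. Since nothing rejected or revoked can be re-accepted, an optimal algorithm must hold precisely the odd-numbered prefix intervals when the prefix ends. Because $I_{i+1}$ has index $2i+1+\ell_1+\cdots+\ell_i$, the first gap where $b$ and $b'$ differ flips the forced choice in the next pair. This gives $2^{k-1}$ pairwise distinct forced behaviours and $k-1=(n-5)/4$ bits at a single $n$. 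Neither your count nor the paper's removes the additive constant, so ``tightening the bookkeeping'' to exactly $n/4$ is not something this construction delivers.

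Separately, your local parity rule is reversed. With a single $X_i$, the intervals $J_i$ and $I_{i+1}$ lie at positions of equal parity, so an optimal solution needs gap decision AA or RR. With $Y_i,Z_i$ they have opposite parity, so it needs AR or RA. This is exactly why the adversary in Theorem~\ref{thm:interval_rev_ub} inserts $X_i$ after AR/RA and $Y_i,Z_i$ after AA/RR. Likewise, $J_i$ followed by a length-$2$ gap forces $J_{i+1}$, not $I_{i+1}$.

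Your ``consistency obstacle'' about consecutive RR gaps is an artifact of this reversal. With the correct rule, write $I=0$, $J=1$ and $b_i=1$ for a single-interval gap. Then the forced pattern satisfies $c_{i+1}=c_i\oplus b_i$ with $c_1=I$, every $b$ is realizable, and no special choice of gap lengths is needed. Your XOR idea therefore survives, since only the labels are swapped. What actually has to be fixed is the uniqueness and common-length issue above.
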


\begin{proof} 
    The prefix of $k$ pairs leaves $k-1$ gaps, each to be connected with two choices. This results in $2^{k-1}$ different input sequences. For each input sequence, we extend the chain by adding intervals at the end, ensuring that all input sequences have exactly $4k+1$ intervals. 
    In each sequence, if we number the intervals $1, \ldots, 4k+1$, the optimal solution is to select the odd-numbered intervals. The parity assigned to the intervals in the prefix differs for each input sequence, hence $2^{k-1}$ distinct algorithms are required to make the correct decisions for the first $2k$ intervals. Hence, at least $k-1$ bits of advice are needed.
\end{proof}

\bibliography{mybib}{}
\bibliographystyle{alpha}

\end{document}